\newtheorem{lemma}{Lemma}[section]
\newtheorem{proposition}{Proposition}[section]
\newtheorem{theorem}{Theorem}[section]
\newtheorem{remark}{Remark}[section]
\DeclareMathAlphabet{\pazocal}{OMS}{zplm}{m}{n}
\def\ps@pprintTitle{%
   \let\@oddhead\@empty
   \let\@evenhead\@empty
   \let\@oddfoot\@empty
   \let\@evenfoot\@oddfoot
}
\begin{document}

\begin{frontmatter}

\title{Well-Rounded Lattices via Polynomials \tnoteref{label}}

\author{Carina Alves }
\tnotetext[label]{This work has been supported by FAPESP grants 2013/25977-7 and 2018/12702-3\\
carina.alves@unesp.br, williamlima.unesp.rc@gmail.com, antonio.andrade@unesp.br}

\author{William Lima da Silva Pinto}

\author{Antonio Aparecido de Andrade}
%

\begin{abstract}
Well-rounded lattices  have been a topic of recent studies with applications in wiretap channels and in cryptography. 
A lattice of full rank in Euclidean space is called well-rounded
if its set of minimal vectors spans the whole space. In this paper, we investigate when  lattices coming from polynomials with integer coefficients are well-rounded. 
\end{abstract}

\begin{keyword}
polynomials  \sep well-rounded lattice \sep minimum norm \sep dense packing.

\MSC[2010]  15A03 \sep  15A06  \sep  15A15 \sep 11C08  \sep  11C20 \sep  11H31   
\end{keyword} 

\end{frontmatter}


\section{Introduction}

A large class of the problems in coding theory is related to the properties of lattices \cite{toni, camp, joao}.  A \textit{lattice} $\Lambda$ is a discrete additive subgroup of $\mathbb{R}^n.$ Equivalently, $\Lambda \subset \mathbb{R}^n$ is a lattice if  there are linearly independent vectors $v_1,\cdots, v_m\in \mathbb{R}^n$, with $m\leq n,$ such that any $x\in \Lambda$ can be written as $x=\displaystyle\sum_{i=1}^{m}x_iv_i$, where $ x_i\in \mathbb{Z}.$ The set $\{v_1,\cdots, v_m\}$ is called a basis for $\Lambda. $ A matrix $M$ whose rows are these vectors is said to be a \textit{generator matrix} for $\Lambda$ and its \textit{Gram matrix} is $G=MM^t,  $ where $^t$ stands for the transpose. If $m=n$, then $\Lambda$ is a \textit{full-ranked lattice}. The \textit{determinant} of $\Lambda$ is given by $det(\Lambda)=det(G)$ and it is an invariant under basis change \cite{conway}.  

The utility of a given lattice for a given application is measured using some relevant invariants, such as the packing density, minimal vectors, etc.
The \textit{minimum} of a lattice $\Lambda$ is defined by
$| \Lambda | = min\{{\Vert x \Vert}^2:x\in \Lambda, \ x\neq 0\}$ and its \textit{center density} is  $\delta(\Lambda)=\frac{(\sqrt{|\Lambda|}/2)^n}{|det(M)|}$.
The set of minimal vectors of $\Lambda$ is defined by $
S(\Lambda)= \{x \in \Lambda:\Vert x \Vert ^2=| \Lambda |\}
$ and its elements are called \textit{minimal vectors} of $\Lambda$. 
We say a lattice $\Lambda$ is well-rounded when $S(\Lambda)$ spans $\mathbb{R}^n$.

Well-rounded lattices and configurations of their minimal vectors play an important role in discrete optimization problems (see \cite{conway, martinet}). In particular, spherical configurations which give good kissing
numbers always come from well-rounded lattices.
In \cite{fukshansky1, fukshansky3}, the authors investigated the connection between well-rounded lattices and the well known ideal lattice, focusing especially on the case of lattices in $\mathbb{R}^2$. 
Another interesting class of lattices in dimension $2$ and $3$ was introduced in \cite{andre}. It essentially consists of constructing a generator matrix from the set of roots of a polynomial.  
Inspired by it and due to the importance of well-rounded lattices, in this paper we investigate in which conditions lattices obtained by polynomials up to dimension $4$ are well-rounded.  In this process, the use of the well known Vieta's formulas  will be essential since they simplify substantially the expression which gives the square distance of $x\in \Lambda$ to the origin.

An advantage of studying well-rounded lattices obtained via polynomials is that, in this process, we are able to establish conditions over a polynomial in such way that  we increase the number of minimum vectors, giving rise to lattices with higher packing densities. In this paper, we find lattices with  the highest packing densities in dimensions $2$ and $3$.
It is shown in  \cite{fukshansky1} that only ideal lattices coming from cyclotomic fields are well-rounded. A particular case of the constructions presented here is that well-rounded lattices $\Lambda_f$ can also be obtained from  totally real number fields $\mathbb{K}$ generated by a root of an irreducible polynomial $f(x)$ over $\mathbb{Q}$ whose roots are real and some restrictions are made to the coefficients of $f(x)$. The generator matrix of $\Lambda_f$ consists of the roots of $f(x).$ More recently, lattices arising from totally real number fields have found applications in communication over wireless channels \cite{oggier}.
 However, more general constructions are presented here and $f(x)$ is not necessarily irreducible.

This paper is organized as follows. In Section II, we present conditions to obtain well-rounded lattices via quadratic polynomials with distinct real roots and complex conjugate roots. We also obtain lattices with the highest center density in dimension $2$. In Section III,  we present conditions to obtain well-rounded lattices via cubic polynomials. In dimension $3$, we obtain lattices with the highest center density. In Section IV, we present conditions to obtain well-rounded lattices via quartic polynomials.
 
\section{Well-rounded lattices via quadratic polynomials} 

In \cite{andre} it is shown that monic polynomials of degree $2$ with distinct real and complex roots yield lattices with the highest center density in dimension $2$. Thus, we consider here both cases. 

\subsection{Distinct real roots}

Let $f(x)=x^2+ax+b\in\mathbb{Z}[x]$ be a polynomial with two distinct real roots denoted by $\alpha$ and  $\beta$. In this case, the discriminant  is greater than zero, i.e, $a^2-4b>0$. Moreover, let us construct a lattice $\Lambda_f$ by identifying a linearly independent set $\{v_1, v_2\}$ over $\mathbb{R},$ where $v_i$ is in terms of the roots of $f(x),$ $i=1,2.$
 In order to choose linearly independent vectors $v_1$ and $v_2$ is enough to ensure that the matrix $M$ from these vectors has nonzero determinant. The matrix $M$ will be then a generator matrix of $\Lambda_f$.
 
Consider $v_1=(\alpha,\beta)$ and $v_2= (\beta,\alpha)$. Thus,
 $$det(M)=\begin{vmatrix}
 \alpha & \beta \\
 \beta & \alpha
 \end{vmatrix}
 = -2a\sqrt{a^2-4b}\neq 0,
 $$
 since $a\neq 0.$  Therefore, if we consider $f(x)=x^2+ax+b\in\mathbb{Z}[x]$, ith $a\neq 0$, then we can define $\Lambda_f$ as the lattice generated by $v_1=(\alpha,\beta)$ and $v_2= (\beta,\alpha)$. 
 
 The next result give us an expression for the minimum of $\Lambda_f.$

 \begin{lemma}\label{normavetor2}
 Let $f(x)=x^2+ax+b \in \mathbb{Z}[x]$, with $a\neq 0$,  be a polynomial with real distinct roots $\alpha$, $\beta$ and $\Lambda_f$ be the lattice generated by $\{v_1,v_2\}$, where $v_1=(\alpha,\beta)$ and $v_2=(\beta,\alpha)$. If   $x=x_1v_1+x_2v_2$ is a point of $\Lambda_f$, where 
  $x_1,x_2\in\mathbb{Z}$, then  $\Vert x \Vert^2=a^2({x_1}^2+{x_2}^2)-2b(x_1-x_2)^2.$
\end{lemma}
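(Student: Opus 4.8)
The plan is to compute $\Vert x\Vert^2$ directly from the coordinates of $x$ and then rewrite the resulting symmetric function of the roots using Vieta's formulas. First I would write $x = x_1 v_1 + x_2 v_2 = (x_1\alpha + x_2\beta,\; x_1\beta + x_2\alpha)$, so that
$$\Vert x\Vert^2 = (x_1\alpha + x_2\beta)^2 + (x_1\beta + x_2\alpha)^2.$$
Expanding the two squares and grouping the terms by monomials in $x_1,x_2$ gives
$$\Vert x\Vert^2 = (x_1^2 + x_2^2)(\alpha^2 + \beta^2) + 4x_1x_2\,\alpha\beta,$$
where the cross term $2x_1x_2\alpha\beta$ contributes once from each square, accounting for the coefficient $4$.

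Next I would invoke Vieta's formulas for $f(x)=x^2+ax+b$, namely $\alpha+\beta=-a$ and $\alpha\beta=b$. From these, $\alpha^2+\beta^2=(\alpha+\beta)^2-2\alpha\beta=a^2-2b$. Substituting both identities into the expression above yields
$$\Vert x\Vert^2 = (x_1^2+x_2^2)(a^2-2b) + 4b\,x_1x_2 = a^2(x_1^2+x_2^2) - 2b\bigl(x_1^2 + x_2^2 - 2x_1x_2\bigr).$$
Recognizing $x_1^2 + x_2^2 - 2x_1x_2 = (x_1-x_2)^2$ completes the computation and gives $\Vert x\Vert^2 = a^2(x_1^2+x_2^2) - 2b(x_1-x_2)^2$, as claimed.

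Since every step is an elementary algebraic manipulation, there is no genuine obstacle here; the only points requiring care are the bookkeeping of the cross terms in the expansion and completing the square cleanly so that the factor $-2b$ is pulled out correctly. Note that the hypothesis $a\neq 0$ plays no role in the identity itself — it was only used beforehand to ensure $\det(M)\neq 0$, i.e.\ that $M$ is a legitimate generator matrix — so I would not invoke it in this proof.
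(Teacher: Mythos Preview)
Your proof is correct and follows essentially the same route as the paper: both expand $\Vert x\Vert^2$ bilinearly in $x_1,x_2$, identify the coefficients as $\alpha^2+\beta^2$ and $2\alpha\beta$, and then apply Vieta's formulas. The paper phrases the expansion as $x_1^2\,v_1\!\cdot\! v_1 + 2x_1x_2\,v_1\!\cdot\! v_2 + x_2^2\,v_2\!\cdot\! v_2$ rather than writing out coordinates, but the computation is the same.
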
 
\begin{proof} We have  that $\Vert x \Vert^2=x\cdot x=x_1^2v_1^2+2x_1x_2v_1v_2+x_2^2v_2^2.$ 
Since $v_i^2=\alpha^2+\beta^2=a^2-2b$, for $i=1,2,$ and $v_1\cdot v_2=2\alpha\beta=a^2-2b$ the result follows. 
\end{proof}

  We are interested in verifying in which conditions $\Lambda_f$ is well-rounded.
For notation purposes, define $d:\mathbb{Z}^2 \rightarrow \mathbb{R}_{+}$ by $d(x_1, x_2)=a^2(x_1^2+x_2^2)-2b(x_1-x_2)^2$, i.e., $\Vert x \Vert^2=d(x_1,x_2)$. 
To identify when $d(x_1,x_2)$ assumes minimum value  it is easy to see that is enough to check its values when $x_1$ and $x_2$ vary between $0$, $1$ and $-1$. Note that
\begin{itemize}
\item[(i)] $d(\pm 1,0)=d(0,\pm 1)=a^2-2b$;
\item[(ii)] $d(\pm 1,\pm 1)=2a^2$;
\item[(iii)] $d(-1,1)=d(1,-1)=2a^2-8b$.
\end{itemize}

We know from \cite{fukshansky1} that $\Lambda_f \subset \mathbb{R}^2$ is well-rounded if and only if $|S(\Lambda)|=4$ or $|S(\Lambda)|=6$.  In this later case,  $\Lambda_f$ has the highest packing density in dimension $2$.
Note that $|S(\Lambda)|=4$ if and only if $min\{a^2-2b, 2a^2,2a^2-8b\}=a^2-2b,$ that is, $a^2-2b\leq a^2$ and $a^2-2b\leq 2a^2-8b.$ This means that $-2b\leq a^2$ and $6b\leq a^2.$
When  $b\geq 0$ ($b<0$) the inequalities above are satisfied if and only if $a^2\geq 6b$ ($a^2\geq -2b$).
It is easy to see that $|S(\Lambda)|=6$ if and only if $a^2-2b=2a^2$ or $2a^2=2a^2-8b$ which is equivalent to $a^2=-2b$ or $a^2=6b.$
 Consequently, we have proved that the following theorem holds true.
 
\begin{theorem}\thlabel{principalgrau2}
Let $f(x)=x^2+ax+b$, where $a,b\in\mathbb{Z}$, with $a\neq 0$, be a polynomial with real distinct roots. 
 If $\alpha,\beta\in\mathbb{R}$ are the distinct roots of $f(x)$, then the lattice 
$\Lambda_f$ generated by the basis $\{ (\alpha,\beta),(\beta,\alpha )\}$ is well-rounded if and only if $a^2\geq -2b$ ($b<0$) or $a^2 \geq 6b$ ($b\geq 0$). Moreover, $\Lambda_f$ has  the highest packing density in dimension 2 if and only if $a^2=6b$ or $a^2=-2b$.
\end{theorem}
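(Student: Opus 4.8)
The plan is to turn the well-roundedness question into a finite check on the explicit quadratic form furnished by Lemma~\ref{normavetor2}. That lemma already gives, for an arbitrary point $x=x_1v_1+x_2v_2\in\Lambda_f$, the value $\Vert x\Vert^2=d(x_1,x_2)=a^2(x_1^2+x_2^2)-2b(x_1-x_2)^2$, so computing $|\Lambda_f|$ and describing $S(\Lambda_f)$ amounts to minimizing $d$ over $\mathbb Z^2\setminus\{0\}$ — the only real work being an a priori bound on the size of the minimal vectors.

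First I would show that the minimum of $d$ is attained at a point with coordinates in $\{-1,0,1\}$. The crucial input here is the discriminant hypothesis $a^2-4b>0$. If $b\ge 0$, then from $(x_1-x_2)^2\le 2(x_1^2+x_2^2)$ one gets $d(x_1,x_2)\ge (a^2-4b)(x_1^2+x_2^2)$; if $b<0$, one trivially has $d(x_1,x_2)\ge a^2(x_1^2+x_2^2)$. In either case $d$ grows linearly in $x_1^2+x_2^2$ with a strictly positive coefficient, so any lattice point with $x_1^2+x_2^2\ge 3$ has $d$ strictly larger than $d(1,-1)=2a^2-8b$ (resp. than $d(1,1)=2a^2$), hence cannot be minimal. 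This leaves only the points with $x_1^2+x_2^2\le 2$, and evaluating $d$ there produces exactly the three candidate values listed in (i)--(iii): $a^2-2b$ (attained by $\pm v_1,\pm v_2$), $2a^2$ (attained by $\pm(v_1+v_2)$), and $2a^2-8b$ (attained by $\pm(v_1-v_2)$). I expect this a priori bounding step to be the only delicate point; everything afterward is bookkeeping.

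Next I would invoke the criterion from \cite{fukshansky1}: a full-rank lattice in $\mathbb R^2$ is well-rounded exactly when it has $4$ or $6$ minimal vectors, the case of $6$ characterizing (up to similarity) the hexagonal lattice, which has the highest center density in dimension $2$. Since $v_1,v_2$ are linearly independent (the determinant computation gave $\det M=-2a\sqrt{a^2-4b}\ne 0$) and they are precisely the vectors realizing $a^2-2b$, the set $S(\Lambda_f)$ spans $\mathbb R^2$ if and only if $a^2-2b$ is a minimum of $\{a^2-2b,\,2a^2,\,2a^2-8b\}$; conversely, if $a^2-2b$ is not minimal, then $S(\Lambda_f)$ consists of a single collinear pair $\pm(v_1+v_2)$ or $\pm(v_1-v_2)$ (a simultaneous tie between $2a^2$ and $2a^2-8b$ forces $b=0$, which in turn makes $a^2-2b$ minimal, so this degenerate case does not occur). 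Hence well-roundedness is equivalent to the two inequalities $a^2-2b\le 2a^2$ and $a^2-2b\le 2a^2-8b$, i.e. $-2b\le a^2$ and $6b\le a^2$.

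Finally I would split on the sign of $b$: when $b\ge 0$ one has $6b\ge -2b$, so the two inequalities reduce to $a^2\ge 6b$; when $b<0$ one has $-2b>6b$, so they reduce to $a^2\ge -2b$. This yields the stated equivalence. For the last assertion, $|S(\Lambda_f)|=6$ occurs exactly when a second candidate value equals the minimum $a^2-2b$, that is $a^2-2b=2a^2$ or $a^2-2b=2a^2-8b$, equivalently $a^2=-2b$ or $a^2=6b$; by the criterion of \cite{fukshansky1} this is precisely when $\Lambda_f$ attains the highest packing density in dimension $2$.
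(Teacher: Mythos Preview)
Your proof is correct and follows essentially the same route as the paper: compute the norm form from Lemma~\ref{normavetor2}, reduce to the finite set of coordinates in $\{-1,0,1\}$, invoke the $|S(\Lambda)|\in\{4,6\}$ criterion from \cite{fukshansky1}, and split on the sign of $b$. The only difference is that you supply a rigorous a~priori bound (via $d(x_1,x_2)\ge (a^2-4b)(x_1^2+x_2^2)$ when $b\ge0$ and $d\ge a^2(x_1^2+x_2^2)$ when $b<0$) where the paper simply asserts that checking $x_i\in\{-1,0,1\}$ suffices; this is a genuine improvement in completeness but not a change of method.
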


\subsection{Complex conjugate roots}

We consider $f(x)$ as above, but with two complex conjugate roots denoted by $\gamma_1=\alpha +i \beta$ and $\gamma_2=\alpha-i\beta$. In this case, its  discriminant is less than zero, i.e., $a^2-4b<0.$  
Note that we can define a lattice $\Lambda_f$ as the one generated by $v_1=(\alpha,\beta)$ and $v_2= (\alpha, -\beta)$ since 
 $$det(M)=\begin{vmatrix}
 \alpha & \beta \\
 \alpha & -\beta
 \end{vmatrix}
 = -2\alpha\beta=\pm\frac{1}{2}a\sqrt{-a^2+4b}\neq 0,
 $$
where in the last equality we use the Vieta's formula, that is, that  $\gamma_1+\gamma_2=-a$ and $\gamma_1\gamma_2=b.$

The next result give us an expression for the norm of a vector in  $\Lambda_f$.

\begin{lemma}\label{normavetcomplexa} Let $f(x)=x^2+ax+b \in \mathbb{Z}[x],$ with $a\neq 0$, be a polynomial with complex conjugate roots $\alpha\pm i\beta,\ \alpha,\beta\in\mathbb{R}, \ \beta\neq 0$ and  $\Lambda_f$ be a lattice  generated by  $\{v_1,v_2\}$, where $v_1=(\alpha,\beta)$ and  $v_2=(\alpha,-\beta)$. If $x=x_1v_1+x_2v_2$ is a point of $\Lambda_f$, where  $x_1,x_2\in\mathbb{Z}$, then $\Vert x \Vert^2=\frac{a^2}{4}(x_1+x_2)^2+\frac{4b-a^2}{4}(x_1-x_2)^2.$
\end{lemma}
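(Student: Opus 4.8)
The plan is to follow exactly the same computation as in the proof of Lemma \ref{normavetor2}, but with the new basis vectors. Writing $x = x_1 v_1 + x_2 v_2$, we expand
$$\Vert x \Vert^2 = x \cdot x = x_1^2\, v_1 \cdot v_1 + 2 x_1 x_2\, v_1 \cdot v_2 + x_2^2\, v_2 \cdot v_2,$$
so the whole statement reduces to computing the three inner products $v_1 \cdot v_1$, $v_1 \cdot v_2$ and $v_2 \cdot v_2$ in terms of $a$ and $b$. Since $v_1 = (\alpha, \beta)$ and $v_2 = (\alpha, -\beta)$, we get $v_1 \cdot v_1 = v_2 \cdot v_2 = \alpha^2 + \beta^2$ and $v_1 \cdot v_2 = \alpha^2 - \beta^2$.

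Next I would translate $\alpha^2 + \beta^2$ and $\alpha^2 - \beta^2$ into the coefficients of $f(x)$ using Vieta's formulas for the complex conjugate roots $\gamma_1 = \alpha + i\beta$, $\gamma_2 = \alpha - i\beta$. From $\gamma_1 + \gamma_2 = -a$ we get $2\alpha = -a$, hence $\alpha^2 = a^2/4$; from $\gamma_1 \gamma_2 = b$ we get $\alpha^2 + \beta^2 = b$, hence $\beta^2 = b - a^2/4 = (4b - a^2)/4$. Therefore $v_1 \cdot v_1 = v_2 \cdot v_2 = b$ and $v_1 \cdot v_2 = \alpha^2 - \beta^2 = a^2/4 - (4b - a^2)/4 = (a^2 - 2b)/2$.

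Substituting these into the expansion gives
$$\Vert x \Vert^2 = b\,(x_1^2 + x_2^2) + (a^2 - 2b)\, x_1 x_2.$$
It then remains to check the algebraic identity that this equals $\tfrac{a^2}{4}(x_1+x_2)^2 + \tfrac{4b-a^2}{4}(x_1-x_2)^2$; expanding the right-hand side yields $\tfrac{a^2}{4}(x_1^2 + 2x_1x_2 + x_2^2) + \tfrac{4b-a^2}{4}(x_1^2 - 2x_1x_2 + x_2^2) = b(x_1^2 + x_2^2) + \bigl(\tfrac{a^2}{4} - \tfrac{4b - a^2}{4}\bigr)2x_1x_2 = b(x_1^2+x_2^2) + (a^2 - 2b)x_1 x_2$, which matches. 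This completes the proof.

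There is no real obstacle here: the argument is a direct mirror of Lemma \ref{normavetor2}, and the only thing to be careful about is the correct sign bookkeeping when applying Vieta's formulas to conjugate roots (in particular that $\gamma_1\gamma_2 = \alpha^2 + \beta^2 = b$, not $\alpha^2 - \beta^2$). The rewriting into the symmetric/antisymmetric form $\tfrac{a^2}{4}(x_1+x_2)^2 + \tfrac{4b-a^2}{4}(x_1-x_2)^2$ is merely a convenient completion of the quadratic form that will make the subsequent minimization over $\mathbb{Z}^2$ transparent, exactly as the form $a^2(x_1^2+x_2^2) - 2b(x_1-x_2)^2$ did in the real-roots case.
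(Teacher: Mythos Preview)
Your proof is correct and follows exactly the approach the paper indicates: compute the Gram entries $v_i\cdot v_j$ and rewrite them via Vieta's formulas $2\alpha=-a$, $\alpha^2+\beta^2=b$, then verify the algebraic identity. The paper's own proof is just the one-line ``apply Vieta's formulas,'' so you have simply made explicit what the authors left to the reader.
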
 
\begin{proof}  The result follows by applying the Vieta's formula in $f(x).$
\end{proof}

\begin{theorem}
Let  $f(x)=x^2+ax+b\in\mathbb{Z}[x]$, with $a\neq 0$, be a polynomial with complex conjugate roots . If $\alpha\pm i\beta$ are the roots of $f(x)$, then the lattice $\Lambda_f$ generated by the  basis $\{ (\alpha,\beta),(\alpha,-\beta ) \}$ is well-rounded if and only if $b\leq a^2\leq 3b$. Moreover, $\Lambda_f$ has the highest packing density   in dimension $2$ if and only if $a^2=b$ or $a^2=3b$.
\end{theorem}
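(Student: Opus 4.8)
The plan is to mimic exactly the argument used for \thref{principalgrau2}, but with the norm form supplied by Lemma~\ref{normavetcomplexa}. First I would introduce the function $d:\mathbb{Z}^2\to\mathbb{R}_+$ by $d(x_1,x_2)=\frac{a^2}{4}(x_1+x_2)^2+\frac{4b-a^2}{4}(x_1-x_2)^2$, so that $\Vert x\Vert^2=d(x_1,x_2)$ for $x=x_1v_1+x_2v_2$. Since $d$ is a positive definite quadratic form (note $4b-a^2>0$ because the roots are complex, and $a\neq 0$), its minimum over nonzero integer points is attained with $x_1,x_2\in\{-1,0,1\}$, so I only need to tabulate the finitely many values
\[
d(\pm1,0)=d(0,\pm1)=\tfrac{a^2}{4}+\tfrac{4b-a^2}{4}=b,\qquad
d(\pm1,\pm1)=a^2,\qquad
d(1,-1)=d(-1,1)=4b-a^2 .
\]

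Next I would invoke the criterion from \cite{fukshansky1} already recalled in the quadratic real case: $\Lambda_f\subset\mathbb{R}^2$ is well-rounded if and only if $|S(\Lambda_f)|=4$ or $|S(\Lambda_f)|=6$, and in the latter case $\Lambda_f$ attains the optimal packing density in dimension $2$. So I determine when the minimum of $\{b,\ a^2,\ 4b-a^2\}$ is achieved by $b$: this requires $b\le a^2$ and $b\le 4b-a^2$, i.e. $a^2\le 3b$; hence $|S(\Lambda_f)|=4$ precisely when $b\le a^2\le 3b$ with both inequalities strict in the generic case, giving $4$ minimal vectors $\pm v_1,\pm v_2$. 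Equality $b=a^2$ brings in the extra pair $\pm(v_1+v_2)$ and equality $a^2=3b$ brings in $\pm(v_1-v_2)$, so $|S(\Lambda_f)|=6$ exactly when $a^2=b$ or $a^2=3b$. Combining, $\Lambda_f$ is well-rounded iff $b\le a^2\le 3b$, and has the highest packing density in dimension $2$ iff $a^2=b$ or $a^2=3b$.

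One small point worth checking carefully is that the stated range is consistent with the hypotheses: from $a^2-4b<0$ we already have $b>0$ (since $a^2\ge 1$), and $b\le a^2$ together with $a^2\le 3b$ is a nonempty constraint, so there is no vacuity; moreover when $a^2<b$ the minimum is $a^2$, attained by the four vectors $\pm v_1\pm v_2$, and one must confirm these still span $\mathbb{R}^2$ — they do, but the set $S(\Lambda_f)$ in that regime has only $4$ elements, which would superficially suggest well-roundedness. This is the only genuine subtlety: the Fukshansky criterion counts $|S(\Lambda)|\in\{4,6\}$ for well-roundedness in $\mathbb{R}^2$, so I must be sure that in the range $a^2<b$ the four short vectors $\pm v_1\pm v_2$ are \emph{not} all of equal length to anything shorter, and indeed verify whether that case produces a well-rounded lattice or not — the claim in the theorem is that it does not, which forces me to recheck that when $a^2<b$ we do not have $|S(\Lambda_f)|=4$ with spanning minimal vectors. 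Resolving this boundary bookkeeping (equivalently, pinning down which of $\{b,a^2,4b-a^2\}$ is strictly smallest in each subregion, and exactly how many vectors realize the minimum) is the main obstacle; everything else is the same routine case analysis as in the proof of \thref{principalgrau2}.
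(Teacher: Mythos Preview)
Your approach is exactly the one the paper uses: define $d$, tabulate its values on $\{-1,0,1\}^2$, and apply the Fukshansky--Petersen criterion. The paper's proof stops after observing that $|S(\Lambda_f)|=4$ iff $\min\{b,a^2,4b-a^2\}=b$ and $|S(\Lambda_f)|=6$ iff equality holds at one end.

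The ``genuine subtlety'' you raise at the end is a phantom produced by a miscount. When $a^2<b$ the minimum $a^2$ is attained only by $(x_1,x_2)=(1,1)$ and $(-1,-1)$, i.e.\ by the two vectors $\pm(v_1+v_2)$, \emph{not} by all four $\pm v_1\pm v_2$; the pair $\pm(v_1-v_2)$ has norm $4b-a^2>a^2$ in that range. Hence $|S(\Lambda_f)|=2$ and $\Lambda_f$ is not well-rounded, exactly as the theorem claims. (Symmetrically, if $3b<a^2<4b$ the minimum $4b-a^2$ is achieved only by $\pm(v_1-v_2)$, again $|S(\Lambda_f)|=2$.) Once you correct this count there is no obstacle left and your argument coincides with the paper's.
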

\begin{proof}
Define $d:\mathbb{Z} \rightarrow \mathbb{R}_{+}$ by $d(x_1,x_2)=\frac{a^2}{4}(x_1+x_2)^2+\frac{4b-a^2}{4}(x_1-x_2)^2$, i.e., $\Vert x \Vert^2=d(x_1,x_2)$. Similarly, to what we have done in Theorem \ref{principalgrau2}, we will check the value of $d(x_1,x_2)$ varying $x_1$ and $x_2$ between $-1$, $0$ and $1$. We have that
\begin{itemize}
\item[(i)] $d(\pm 1,0)=d(0,\pm 1)=b$;
\item[(ii)]$d(\pm 1, \pm 1)=a^2$;
\item[(iii)]$d(1,-1)=d(-1,1)=4b-a^2$.
\end{itemize}
Again, by \cite{fukshansky1}, it follows that  $\Lambda_f$ is well-rounded if and only if $|S(\Lambda_f)|=4$ or $|S(\Lambda_f)|=6$.  
Note that $|S(\Lambda)|=4$ if and only if $min\{b, a^2,4b-a^2\}=b,$ that is, $b\leq a^2$ and $a^2\leq 3b.$ 
When  $b> 0$ we conclude that $|S(\Lambda)|=4$ if and only if 
$b\leq a^2 \leq 3b.$ 
It is clear  that $|S(\Lambda_f)|=6$ if and only if $b=a^2$ or $a^2=3b.$ In this case,  $\Lambda_f$ has the highest packing density in dimension $2$.
\end{proof}

\section{Well-rounded lattices via cubic polynomials}

Let $f(x)=x^3+ax^2+bx+c$ be a polynomial with integers coefficients with  three  distinct real roots denoted by $\alpha,\,\beta$ and  $\gamma$. 
 Let us construct a lattice $\Lambda_f$ by identifying a linearly independent set $\{v_1, v_2, v_3\}$ over $\mathbb{R},$ where $v_i$ are given in terms of the roots of $f(x),$ for $i=1,2,3$.
  In order to choose linearly independent vectors $v_1, v_2$ and $v_3$ is enough to ensure that the matrix $M$ whose rows are these vectors has nonzero determinant. The matrix $M$ will be then a generator matrix to $\Lambda_f.$
 Consider $v_1=(\alpha,\beta,\gamma)$,  $v_2= (\gamma,\alpha,\beta)$ and $v_3=(\beta,\gamma,\alpha)$. A simple calculation shows that
$$det(M)=\begin{vmatrix}
        \alpha & \beta  & \gamma \\
        \gamma & \alpha & \beta  \\
        \beta  & \gamma & \alpha \\
 \end{vmatrix}
 = -a(a^2-3b).
 $$
Since $f(x)$ has real distinct roots, it follows that $f'(x)$ also has, which happens if its discriminant is greater than zero, that is,  $a^2-3b>0$.
Therefore, if $a\neq 0$, then $det(M) \neq 0$.

The next result give us an expression to the minimum of $\Lambda_f$.

\begin{lemma} \cite{andre} \label{lemanorm3}
 Let $f(x)=x^3+ax^2+bx+c \in\mathbb{Z}[x]$, with $a\neq 0$, be a polynomial with  distinct real roots
 $\alpha, \beta$ and $\gamma.$ Let $\Lambda_f$ be a lattice in $\mathbb{R}^3$ generated by the basis $\{v_1,v_2,v_3\}$, where  $v_1=(\alpha,\beta,\gamma)$, $v_2=(\gamma,\alpha,\beta)$ and $v_3=(\beta,\gamma,\alpha)$. If $x \in \Lambda_f$, where $x=x_1v_1+x_2v_2+x_3v_3$, with $x_1,x_2,x_3\in\mathbb{Z}$, then
  \[\Vert x\Vert ^2=(a^2-2b)(x_1^2+x_2^2+x_3^2) + 2b(x_1x_2+x_1x_3+x_2x_3).\]
\end{lemma}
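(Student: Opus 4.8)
The plan is to mimic the proof of Lemma \ref{normavetor2}, computing $\|x\|^2 = x \cdot x$ by expanding it through bilinearity into the entries of the Gram matrix $G = MM^t$ and then invoking Vieta's formulas. Writing $x = x_1 v_1 + x_2 v_2 + x_3 v_3$, bilinearity of the inner product gives
\[
\|x\|^2 = \sum_{i=1}^{3} x_i^2 \, (v_i \cdot v_i) + 2 \sum_{1 \le i < j \le 3} x_i x_j \, (v_i \cdot v_j),
\]
so the whole computation reduces to evaluating the six inner products $v_i \cdot v_j$.

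First I would exploit the cyclic structure of the basis. Since $v_2$ and $v_3$ are obtained from $v_1 = (\alpha, \beta, \gamma)$ by cyclic permutation of the coordinates, each $v_i \cdot v_i$ equals $\alpha^2 + \beta^2 + \gamma^2$. Likewise, every off-diagonal product unwinds to the same symmetric expression: for instance $v_1 \cdot v_2 = \alpha\gamma + \beta\alpha + \gamma\beta = \alpha\beta + \beta\gamma + \gamma\alpha$, and the identical value results for $v_1 \cdot v_3$ and $v_2 \cdot v_3$. Thus the Gram matrix has constant diagonal $\alpha^2 + \beta^2 + \gamma^2$ and constant off-diagonal entry $\alpha\beta + \beta\gamma + \gamma\alpha$.

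Next I would translate these two symmetric functions of the roots into the coefficients of $f(x)$ via Vieta's formulas, namely $\alpha + \beta + \gamma = -a$, $\alpha\beta + \beta\gamma + \gamma\alpha = b$, and $\alpha\beta\gamma = -c$. The off-diagonal entry is then immediately $b$, while the diagonal entry follows from the identity $\alpha^2 + \beta^2 + \gamma^2 = (\alpha + \beta + \gamma)^2 - 2(\alpha\beta + \beta\gamma + \gamma\alpha) = a^2 - 2b$. Substituting both values back into the bilinear expansion yields
\[
\|x\|^2 = (a^2 - 2b)(x_1^2 + x_2^2 + x_3^2) + 2b(x_1 x_2 + x_1 x_3 + x_2 x_3),
\]
which is precisely the claimed formula.

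Since every step is a finite algebraic manipulation, there is no substantive obstacle; the only point requiring care is verifying that all three off-diagonal inner products genuinely coincide, which is exactly where the cyclic choice of $v_1, v_2, v_3$ is essential. A different permutation pattern would in general destroy the constant off-diagonal structure and produce a messier quadratic form, so the symmetry of the construction is what makes the clean expression possible.
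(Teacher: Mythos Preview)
Your argument is correct. The paper does not actually supply a proof of this lemma, instead citing \cite{andre}; your computation via the Gram matrix entries and Vieta's formulas is exactly the template the paper itself uses for the analogous Lemmas~\ref{normavetor2}, \ref{normavetcomplexa}, and \thref{norma4}, so nothing further is needed.
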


Now, since we have an expression to the minimum norm of $\Lambda_f$ the next theorem tells us when $\Lambda_f$ is well-rounded.

\begin{theorem}\label{teograu3}
Let  $f(x)=x^3+ax^2+bx+c\in\mathbb{Z}[x]$, with $a\neq 0.$  If $\alpha,\beta,\gamma\in\mathbb{R}$ are the distinct real roots of $f(x)$, then the lattice $\Lambda_f$ generated by the basis $\{(\alpha,\beta,\gamma),(\gamma,\alpha,\beta),(\beta,\gamma,\alpha)\}$ is well-rounded if and only if $a^2\geq 4b$ ($b\geq 0$) or  $a^2\geq -b$ ($b<0$). Moreover, $\Lambda_f$ has the highest packing density for dimension $3$ if and only if $a^2=4b$.
\end{theorem}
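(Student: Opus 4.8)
The plan is to mirror the proof of Theorem~\ref{principalgrau2} from the quadratic case, using Lemma~\ref{lemanorm3} as the starting point. First I would introduce the quadratic form $d:\mathbb{Z}^3\to\mathbb{R}_+$ defined by $d(x_1,x_2,x_3)=(a^2-2b)(x_1^2+x_2^2+x_3^2)+2b(x_1x_2+x_1x_3+x_2x_3)$, so that $\Vert x\Vert^2=d(x_1,x_2,x_3)$ for $x=x_1v_1+x_2v_2+x_3v_3$. As in the degree-$2$ analysis, the minimum of $\Lambda_f$ is attained at a vector whose coordinates lie in $\{-1,0,1\}$, so I would tabulate the finitely many relevant values: $d(\pm1,0,0)=a^2-2b$ (and its permutations), $d(\pm1,\pm1,0)=2a^2-2b$ (like signs) versus $d(1,-1,0)=2a^2-2b$ — here one must be careful: with $2b(x_1x_2)=\pm2b$ one gets $2(a^2-2b)\pm 2b$, i.e.\ $2a^2-2b$ or $2a^2-6b$ — and $d(1,1,1)=3(a^2-2b)+6b=3a^2$, $d(1,1,-1)=3(a^2-2b)-2b=3a^2-8b$, $d(1,-1,1)$ and $d(-1,1,1)$ likewise. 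So the candidate minimal values are $a^2-2b$, $2a^2-2b$, $2a^2-6b$, $3a^2$, and $3a^2-8b$.

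Next I would invoke the characterization of well-roundedness in $\mathbb{R}^3$: the lattice $\Lambda_f$ is well-rounded precisely when $S(\Lambda_f)$ contains three linearly independent vectors. The key observation is that the six vectors $\pm v_1,\pm v_2,\pm v_3$, whose squared norm is $a^2-2b$, are linearly independent as a set (they span $\mathbb{R}^3$), so $\Lambda_f$ is well-rounded if and only if $a^2-2b=|\Lambda_f|$, i.e.\ $a^2-2b$ is the minimum among all the candidate values listed above. Thus I would require
\[
a^2-2b\le 2a^2-2b,\quad a^2-2b\le 2a^2-6b,\quad a^2-2b\le 3a^2,\quad a^2-2b\le 3a^2-8b.
\]
The first is automatic, the third reduces to $0\le 2a^2+2b$, the second to $0\le a^2-4b$, and the fourth to $0\le 2a^2-6b$; since $a^2-4b\le 0$ implies $a^2\le 4b$ hence (for $b\ge 0$) $2a^2\le 8b$, i.e.\ the fourth follows once $a^2\ge 4b$ fails to be the binding constraint — one checks the binding constraints are exactly $a^2\ge 4b$ and $a^2\ge -b$. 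Splitting on the sign of $b$ exactly as in Theorem~\ref{principalgrau2}: when $b\ge 0$ the condition $a^2\ge -b$ is free and the remaining constraints collapse to $a^2\ge 4b$; when $b<0$ the condition $a^2\ge 4b$ is free and everything collapses to $a^2\ge -b$. This yields the stated equivalence.

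For the ``highest packing density'' claim I would recall that in dimension $3$ the densest lattice is the face-centered cubic lattice $A_3$, with kissing number $12$; equivalently, $\Lambda_f$ attains it exactly when $|S(\Lambda_f)|=12$, which happens iff one of the strict inequalities above becomes an equality that brings an extra orbit of vectors down to the minimal norm. The orbit $\{\pm v_1,\pm v_2,\pm v_3\}$ has size $6$; the orbit of $d=2a^2-6b$ vectors such as $(1,-1,0)$ and its coordinate permutations and sign flips also has size $6$. Hence $|S(\Lambda_f)|=12$ precisely when $a^2-2b=2a^2-6b$, i.e.\ $a^2=4b$ (note this forces $b>0$), and I would verify that at $a^2=4b$ none of the other candidate values $3a^2$, $3a^2-8b=a^2$, $2a^2-2b$ drops below $a^2-2b=2b$, so the minimum is genuinely $2b$ with exactly $12$ minimal vectors. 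The case $a^2=-b$ only equalizes $a^2-2b$ with $3a^2$ or $2a^2-2b$? — it gives $a^2-2b=3a^2$ forcing $a=0$, contradiction, so it does not produce an extra orbit; thus $a^2=4b$ is the only possibility, matching the statement.

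The main obstacle I anticipate is bookkeeping the cross-term $2b(x_1x_2+x_1x_3+x_2x_3)$ correctly across all sign patterns of $(x_1,x_2,x_3)\in\{-1,0,1\}^3$ — in particular distinguishing $d(1,1,-1)$-type vectors from $d(1,1,1)$-type and from $d(1,-1,0)$-type, and making sure the claim ``the minimum is attained in $\{-1,0,1\}^3$'' is justified (this follows because $d$ is a positive-definite quadratic form whose Gram matrix $G=(a^2-2b)I+b(J-I)$ has the three $v_i$ as a reduced-enough basis that any shorter vector must have $0,\pm1$ coordinates — alternatively one argues directly that increasing any $|x_i|$ beyond $1$ strictly increases $d$ given the constraints on $a,b$, or one cites the corresponding reduction argument in~\cite{andre}). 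Everything else is the same sign-splitting argument already carried out in the degree-$2$ case.
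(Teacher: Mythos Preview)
Your overall plan matches the paper's proof, but there are two genuine gaps in the execution.

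First, the ``only if'' direction of your claim ``$\Lambda_f$ is well-rounded if and only if $a^2-2b=|\Lambda_f|$'' is not justified. Observing that $\pm v_1,\pm v_2,\pm v_3$ span $\mathbb{R}^3$ only gives sufficiency. For necessity you must check that whenever the minimum is attained at some other candidate value, the corresponding minimal vectors fail to span $\mathbb{R}^3$. The paper does precisely this case split: if $b>0$ and $a^2-2b$ is not minimal then the minimum is $2a^2-6b$, realized by $\pm(v_i-v_j)$, which span only a plane; if $b<0$ and $a^2-2b$ is not minimal then the minimum is $3a^2$, realized by $\pm(v_1+v_2+v_3)$, which span only a line. (The paper also rules out $3a^2-8b$ as a strict minimum via $3a^2-8b=(a^2-2b)+(2a^2-6b)$.) Without these checks your equivalence is unproven.

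Second, your dismissal of the boundary case $a^2=-b$ in the packing-density argument contains an arithmetic error: $a^2-2b=3a^2$ is equivalent to $a^2=-b$, not to $a=0$, so there is no contradiction. What actually happens when $a^2=-b$ is that the orbit $\pm(v_1+v_2+v_3)$ of size $2$ merges with the basis orbit, giving $|S(\Lambda_f)|=8$, not $12$. The paper handles this not by counting minimal vectors but by computing the center densities directly: $\delta(\Lambda_f)=3\sqrt{3}/32$ at $a^2=-b$ versus $\delta(\Lambda_f)=1/(4\sqrt{2})$ at $a^2=4b$, the latter being the optimal value in dimension~$3$. Either computation (kissing number $8$ vs.\ $12$, or the explicit densities) closes the argument, but your claimed contradiction does not.
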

\begin{proof}
Define  $d:\mathbb{Z}^2 \rightarrow \mathbb{R}_{+}$ by $d(x_1,x_2,x_3)=(a^2-2b)(x_1^2+x_2^2+x_3^2)+2b(x_1x_2+x_1x_3+x_2x_3)$, i.e., $\Vert x \Vert^2=d(x_1,x_2, x_3)$.
To identify when $d(x_1,x_2,x_3)$ assumes minimum value it is easy to see that is enough to check its value when $x_1,\,x_2$ and $x_3$ vary between $0$, $1$ and $-1$. We have that 
\begin{itemize}
\item[(i)] $d(\pm 1,0,0)=d(0,\pm 1,0)=d(0,0,\pm 1)=a^2-2b$; 
\item[(ii)] $d(\pm 1,\pm 1,0)=d(\pm 1,0,\pm 1)=d(0,\pm 1,\pm 1)=2a^2-2b$; 
\item[(iii)] $d(1,-1,0)=d(-1,1,0)=d(1,0,-1)=d(-1,0,1)=d(0,1,-1)=d(0,-1,1)=2a^2-6b$; 
\item[(iv)] $d(\pm 1,\pm 1,\pm 1)=3a^2$;
\item[(v)] $d(1,1,-1)=d(1,-1,1)=d(-1,1,1)=d(-1,-1,1)=d(-1,1,-1)=d(1,-1,-1)=3a^2-8b$.
\end{itemize}
Note that if $a^2-2b=min\{a^2-2b,2a^2-2b,2a^2-6b,3a^2,3a^2-8b\}$, then $\Lambda_f$ is well-rounded, since the vectors  $x\in \Lambda_f$ such that $\Vert x \Vert ^2=a^2-2b$ are linearly independent. 
Let us show that $a^2-2b=min\{a^2-2b,2a^2-2b,2a^2-6b,3a^2,3a^2-8b\}$ is also a necessary condition for $\Lambda_f$ to be well-rounded. 
Suppose that $\Lambda_f$ is well-rounded and  $a^2-2b\neq min\{a^2-2b,2a^2-2b,2a^2-6b,3a^2,3a^2-8b\}$. Clearly $b\neq 0.$ Moreover, $3a^2-8b$ since $3a^2-8b=(a^2-2b)+(2a^2-6b)>min\{a^2-2b,2a^2-2b,2a^2-6b,3a^2,3a^2-8b\}$.
When $b>0$, it follows that $2a^2-6b\leq 2a^2-2b,$ $2a^2-6b\leq 3a^2-8b,$ $2a^2-6b<3a^2.$ By our assumption $2a^2-6b<a^2-2b$
and then we conclude that $2a^2-6b=min\{a^2-2b,2a^2-2b,2a^2-6b,3a^2,3a^2-8b\}$. Since  the vectors  $x\in \Lambda_f$ such that $\Vert x \Vert ^2=2a^2-6b$ are linearly dependent, it follows that $S(\Lambda_f)$ does not span $\mathbb{R}^3.$
When $b<0$, it follows that  $2a^2-2b<2a^2-6b<3a^2-8b$ and $a^2-2b<2a^2-2b.$ By our assumption, we conclude that
 $min \{a^2-2b,2a^2-2b,2a^2-6b,3a^2,3a^2-8b\}=3a^2$. Again, it is easy to see that in this case $S(\Lambda_f)$ does not span $\mathbb{R}^3.$
 Thus, $\Lambda_f$ is well-rounded if and only if $a^2-2b= min\{a^2-2b,2a^2-2b,2a^2-6b,3a^2,3a^2-8b\}$.  It means that  
$a^2-2b\leq 2a^2-2b$,  $a^2-2b\leq 2a^2-6b$, $a^2-2b\leq 3a^2$, $a^2-2b\leq 3a^2-8b$. 
When $b\geq 0$ ($b<0$)  the inequalities above are satisfied if and only if $a^2\geq 4b$ ($a^2\geq -b$).
When $a^2=-b$ or $a^2=4b$ we see that $|S(\Lambda_f)|$ increases. It means that $\Lambda_f$ has higher center density when these equalities are satisfied. Note that if $a^2=-b\, (b<0)$, then $\delta(\Lambda_f)=\frac{(\sqrt{3(-b)}/2)^3}{a(4b)}=\frac{3\sqrt{3}}{32}\approx  0.16238.$ On the other hand, if $a^2=4b$ ,then $\delta(\Lambda_f)=\frac{(\sqrt{4b}/2)^3}{|ab|}=\frac{1}{4\sqrt{2}}\approx 0.17679$ corresponding to the highest center density in  dimension $3$.
\end{proof}

\section{Well-rounded lattices via quartic polynomials}

We  consider here a monic polynomial of degree $4$ with integer coefficients and real roots. We will restrict our investigation to polynomials with two roots opposite roots of each other. The reason behind of it  will be detailed in the proposition that come next.

\begin{proposition}\label{formuladetG3}
Let  $f(x)=x^4+ax^3+bx^2+cx+d\in \mathbb{Z}[x],$ with $a\neq 0$, be a polynomial  with real and distinct roots $\alpha,\beta,\gamma,\psi$, such that $\alpha=-\gamma$. Under these conditions, the vectors $(\alpha,\beta,\gamma,\psi)$, $(\psi,\alpha,\beta,\gamma)$ $(\gamma,\psi,\alpha,\beta)$, and
 $(\beta,\gamma,\psi,
\alpha)$ are linearly independent.
\end{proposition}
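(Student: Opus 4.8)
The plan is to compute the determinant of the $4\times 4$ circulant-type matrix
\[
M=\begin{pmatrix}
\alpha & \beta & \gamma & \psi\\
\psi & \alpha & \beta & \gamma\\
\gamma & \psi & \alpha & \beta\\
\beta & \gamma & \psi & \alpha
\end{pmatrix}
\]
and show it is nonzero under the stated hypotheses. This matrix is a circulant, so its determinant factors as $\prod_{j=0}^{3}\bigl(\alpha+\beta\omega^{j}+\gamma\omega^{2j}+\psi\omega^{3j}\bigr)$, where $\omega=i$ is a primitive fourth root of unity. Writing out the four factors: for $j=0$ we get $s_1:=\alpha+\beta+\gamma+\psi$; for $j=2$ we get $(\alpha+\gamma)-(\beta+\psi)$; and for $j=1,3$ we get the complex-conjugate pair $(\alpha-\gamma)\mp i(\beta-\psi)$, whose product is $(\alpha-\gamma)^2+(\beta-\psi)^2$. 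Hence
\[
\det(M)=\bigl[(\alpha+\gamma)^2-(\beta+\psi)^2\bigr]\bigl[(\alpha-\gamma)^2+(\beta-\psi)^2\bigr].
\]

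Now I invoke the hypotheses. Since $\alpha=-\gamma$, the first bracket becomes $-(\beta+\psi)^2$ and the second becomes $4\alpha^2+(\beta-\psi)^2$. By Vieta applied to $f$, the sum of the roots is $\alpha+\beta+\gamma+\psi=-a$, and since $\alpha+\gamma=0$ this gives $\beta+\psi=-a$. Therefore the first bracket equals $-a^2$, which is nonzero because $a\neq 0$. The second bracket, $4\alpha^2+(\beta-\psi)^2$, is a sum of two squares; it vanishes only if $\alpha=0$ and $\beta=\psi$. But $\alpha=0$ together with $\alpha=-\gamma$ would force $\alpha=\gamma$, contradicting the assumption that the roots are distinct (and $\beta=\psi$ is likewise excluded by distinctness). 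So the second bracket is strictly positive, and $\det(M)=-a^2\bigl[4\alpha^2+(\beta-\psi)^2\bigr]\neq 0$, proving the four vectors are linearly independent.

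The only place requiring care is the circulant determinant factorization — one should either cite the standard formula or, to keep the paper self-contained, verify the identity $\det(M)=\bigl[(\alpha+\gamma)^2-(\beta+\psi)^2\bigr]\bigl[(\alpha-\gamma)^2+(\beta-\psi)^2\bigr]$ directly by expansion (it is a short computation, grouping the matrix into $2\times 2$ blocks and using the block structure, or simply expanding along the first row and collecting terms). Everything after that is immediate from $a\neq 0$ and the distinctness of the roots; no delicate estimates are involved. I would present the block/circulant computation as the one displayed calculation and then dispatch the nonvanishing in two short sentences as above.
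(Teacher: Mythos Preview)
Your proof is correct and, in fact, cleaner than the paper's. Both arguments compute $\det(M)$ and show it is nonzero, but the routes differ. The paper expands $\det(M)$ directly and then massages the result through a chain of Vieta identities (using both $\beta+\psi=-a$ and $-\gamma^{2}+\beta\psi=b$) to obtain the closed form $\det(M)=-a^{2}(a^{2}-4b)$; nonvanishing is then argued by contradiction, showing $a^{2}=4b$ would force $(\beta-\psi)^{2}=-4\gamma^{2}$. You instead exploit the circulant structure to factor $\det(M)=\bigl[(\alpha+\gamma)^{2}-(\beta+\psi)^{2}\bigr]\bigl[(\alpha-\gamma)^{2}+(\beta-\psi)^{2}\bigr]$ in one stroke, and then only the single Vieta relation $\beta+\psi=-a$ together with distinctness is needed. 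Your sum-of-squares argument for the second factor is more transparent than the paper's contradiction. The trade-off is that the paper's version lands on an expression purely in the coefficients $a,b$, consistent with the paper's theme of describing everything via the polynomial's coefficients; your expression $-a^{2}\bigl[4\alpha^{2}+(\beta-\psi)^{2}\bigr]$ is in terms of the roots, though one easily checks (via $\beta\psi=b+\gamma^{2}$) that the bracket equals $a^{2}-4b$, recovering the paper's formula. Your caveat about either citing or directly verifying the circulant determinant identity is well placed; a short block-$2\times 2$ computation would keep the argument self-contained.
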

\begin{proof} Since $\alpha=-\gamma$, it follows that $2\gamma=\gamma-\alpha$.  Moreover,  the  Vieta's formula implies that  $\beta+\psi=-a,$  $-\gamma^2+\beta\psi=b,$ $-\beta\gamma^2-\gamma^2\psi=-c$ and $-\beta\gamma^2\psi=d.$
Consider  the matrix $M$ whose rows are the vectors above, that is,
\[
M=
\begin{pmatrix}
-\gamma & \beta & \gamma & \psi \\
\psi & -\gamma& \beta & \gamma \\
\gamma & \psi & -\gamma & \beta \\
\beta &\gamma & \psi & -\gamma \\
\end{pmatrix}.
\]
In what follows, we are going to prove that $det(M) \neq 0$ making use of  the formulas above.
In fact,
\begin{align}\label{detG3}
det(M)& =(-\beta^4-\psi^4)+2\beta^2\psi^2-4\beta^2\gamma^2-4\gamma^2\psi^2-8\beta\gamma^2\psi \nonumber \\
&= -(\beta^2-\psi^2)^2-4\gamma^2(\beta^2+\psi^2+2\beta\psi)\nonumber \\
& = -(\beta-\psi)^2(\beta+\psi)^2-4\gamma^2(\beta+\psi)^2\nonumber \\
& = -(\beta+\psi)^2(4\gamma^2+(\beta-\psi)^2)\nonumber \\
&=-a^2(4\gamma^2+(\beta-\psi)^2)\nonumber \nonumber \\
& =-a^2((\gamma-\alpha)^2+(\beta-\psi)^2) \nonumber \\
&=-a^2(\gamma^2-2\gamma\alpha+\alpha^2+\beta^2-2\beta\psi +\psi^2)\nonumber \nonumber \\
& =-a^2(\alpha^2+\beta^2+\gamma^2+\psi^2-2(\gamma\alpha+\beta\psi))\nonumber \nonumber \\
&=-a^2(2\gamma^2+(\beta+\psi)^2-2\beta\psi-2(-\gamma^2+\beta\psi))\nonumber \\
&=-a^2(2\gamma^2+a^2-2\beta\psi+2\gamma^2-2\beta\psi)\nonumber \\
&=-a^2(4\gamma^2+a^2-4\beta\psi)\nonumber \nonumber \\
&=-a^2(4(\gamma^2-\beta\psi)+a^2)\nonumber \nonumber \\
&=-a^2(-4b+a^2).
\end{align}
Now, let us prove that $-a^2(-4b+a^2)\neq 0$. 
Suppose, for contradiction, that $-a^2(-4b+a^2)=0$. Since $a\neq 0$, it follows that $-4b+a^2=0.$ Replacing the Vieta's formula, the last equality implies that $4(\gamma^2-\beta\psi)+(\beta+\psi)^2=0,$ i.e., $(\beta-\psi)^2=-4\gamma^2,$ which is a contradiction.  
  Therefore, $det(M)=-a^2(-4b+a^2)\neq 0$ and the result follows. 
\end{proof} 

\begin{remark} It is important to note that the hypothesis $\alpha=-\gamma$ is essential to determining $det(M)$ in terms of the coefficients of $f(x).$
 The same remark can be done about the calculation of norm of a vector in $\Lambda_f$, as we see next. 
 Therefore, the hypothesis $\alpha=-\gamma$ is crucial and unique to our  approach.
 \end{remark}

According to Proposition \ref{formuladetG3},  we can consider a lattice in $\mathbb{R}^4$ generated by the  linearly independent vectors $v_1=(\alpha,\beta,\gamma,\psi),$ $v_2=(\beta,\gamma,\psi,\alpha),$ $v_3=(\gamma,\psi,\alpha,\beta)$ and $v_4=(\psi,\alpha,\beta,\gamma)$ over $\mathbb{R},$ where $\alpha,\beta,\gamma,\psi$ are the roots of $x^4+ax^3+bx^2+cx+d\in \mathbb{Z}[x],$ with $a\neq 0$, such that  $\alpha=-\gamma$.
 
\begin{lemma}\thlabel{norma4}
Let  $f(x)=x^4+ax^3+bx^2+cx+d\in \mathbb{Z}[x],$ with $a\neq 0$, be a polynomial  with distinct real roots $\alpha,\beta,\gamma,\psi$, such that  $\alpha=-\gamma$ and $\Lambda_f$ be a lattice generated by $\{v_1,v_2,v_3,v_4\},$ where $v_1=(\alpha,\beta,\gamma,\psi),$ $v_2=(\beta,\gamma,\psi,\alpha),$ $v_3=(\gamma,\psi,\alpha,\beta)$ and $v_4=(\psi,\alpha,\beta,\gamma).$ If $x=x_1v_1+x_2v_2+x_3v_3+x_4v_4$ is a point of $\Lambda_f,$ where $x_1, x_2,x_3, x_4\in \mathbb{Z}$, then 
 $\Vert x \Vert^2=(a^2-2b)(x_1^2+x_2^2+x_3^2+x_4^2)+4b(x_1x_3+x_2x_4)$.
\end{lemma}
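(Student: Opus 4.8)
The plan is to expand $\Vert x\Vert^2 = x\cdot x$ as the quadratic form associated with the Gram matrix $G=(v_i\cdot v_j)_{1\le i,j\le 4}$, so that $\Vert x\Vert^2=\sum_{i,j}x_ix_j\,(v_i\cdot v_j)$, and then to compute each entry $v_i\cdot v_j$ in terms of $a$ and $b$ using Vieta's formulas together with the hypothesis $\alpha=-\gamma$. The useful structural observation is that $v_2,v_3,v_4$ are the successive cyclic shifts of $v_1=(\alpha,\beta,\gamma,\psi)$, so $v_i\cdot v_j$ depends only on $i-j$ modulo $4$; hence there are only three values to determine: the common squared length $v_i\cdot v_i$, the ``adjacent'' products $v_i\cdot v_{i\pm1}$, and the ``opposite'' products $v_i\cdot v_{i+2}$.

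First I would record the consequences of $\alpha=-\gamma$ combined with Vieta: from $\alpha+\beta+\gamma+\psi=-a$ one gets $\beta+\psi=-a$, and from the relation that the sum of the six products of pairs of roots equals $b$, after grouping the four mixed terms as $\alpha\beta+\alpha\psi+\beta\gamma+\gamma\psi=(\alpha+\gamma)(\beta+\psi)=0$, one gets $\alpha\gamma+\beta\psi=b$. Squaring $\alpha+\beta+\gamma+\psi=-a$ and using the pairwise-product relation then gives $v_i\cdot v_i=\alpha^2+\beta^2+\gamma^2+\psi^2=a^2-2b$ for every $i$.

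Next I would compute the two off-diagonal types. The adjacent product is $v_1\cdot v_2=\alpha\beta+\beta\gamma+\gamma\psi+\psi\alpha=(\alpha+\gamma)(\beta+\psi)$, which vanishes precisely because $\alpha=-\gamma$; by the cyclic structure the same holds for $v_2\cdot v_3$, $v_3\cdot v_4$ and $v_4\cdot v_1$, and the ``distance $3$'' products coincide with these. The opposite product is $v_1\cdot v_3=2(\alpha\gamma+\beta\psi)=2b$, and likewise $v_2\cdot v_4=2b$. Substituting into $\sum_{i,j}x_ix_j\,(v_i\cdot v_j)$ — where the monomial $x_1x_3$ has coefficient $v_1\cdot v_3+v_3\cdot v_1=4b$, the monomial $x_2x_4$ has coefficient $4b$ as well, and all adjacent cross terms drop out — yields $\Vert x\Vert^2=(a^2-2b)(x_1^2+x_2^2+x_3^2+x_4^2)+4b(x_1x_3+x_2x_4)$, as claimed.

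The computation is routine once the cyclic structure is exploited; the only substantive point is that the hypothesis $\alpha=-\gamma$ is exactly what forces every nearest-neighbour inner product $v_i\cdot v_{i\pm1}$ to vanish, which collapses the Gram matrix to the two-parameter form above and makes $\Vert x\Vert^2$ expressible purely through $a$ and $b$ — precisely the content of the remark preceding the lemma.
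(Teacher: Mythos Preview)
Your proof is correct and follows essentially the same approach as the paper: both arguments reduce to computing the three quantities $\alpha^2+\beta^2+\gamma^2+\psi^2=a^2-2b$, $\alpha\beta+\alpha\psi+\beta\gamma+\gamma\psi=(\alpha+\gamma)(\beta+\psi)=0$, and $\alpha\gamma+\beta\psi=b$ via Vieta and the hypothesis $\alpha=-\gamma$. Your framing through the Gram matrix and the explicit use of the cyclic-shift structure to reduce to three inner-product types is a tidy way to organize the computation, but the substance is identical to the paper's direct coordinate expansion and regrouping.
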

\begin{proof} It is  easy to see that  $\Vert x \Vert^2=(x_1\alpha+x_2\beta+z_3\gamma+z_4\psi)^2+(x_1\beta+x_2\gamma+z_3\psi+z_4\alpha)^2+(x_1\gamma+x_2\psi+z_3\alpha+z_4\beta)^2+(x_1\psi+x_2\alpha+z_3\beta+z_4\gamma)^2$. Rearranging, it follows that
\begin{equation}\label{1norma4}
\begin{split}
\Vert x \Vert^2&=(\alpha^2+\beta^2+\gamma^2+\psi^2)(x_1^2+x_2^2+x_3^2+x_4^2)+\\
&+2(\alpha\beta+\alpha\psi+\beta\gamma+\gamma\psi)(x_1x_2+x_1x_4+x_3x_4+x_2x_3)+4(\alpha\gamma+\beta\psi)(x_1x_3+x_2x_4).
\end{split}
\end{equation}
Since  $\alpha=-\gamma$, by Vieta's formulas, it follows that $\alpha\beta+\alpha\psi+\beta\gamma+\gamma\psi=0$ and $\alpha\gamma+\beta\psi=b$. Moreover, $\alpha^2+\beta^2+\gamma^2+\psi^2=2\gamma^2+\beta^2+\psi^2=2\gamma^2-2\beta\psi+(\beta+\psi)^2=-2b+a^2.$
 Consequently, the Equation (\ref{1norma4}) implies $\Vert x \Vert^2=(a^2-2b)(x_1^2+x_2^2+x_3^2+x_4^2)+4b(x_1x_3+x_2x_4)$.
\end{proof}

We are interested in verifying in which conditions $\Lambda_f$ is well-rounded. To do it,  
 define $d:\mathbb{Z}^4\rightarrow \mathbb{R}_{+}$ by $d(x_1,x_2,x_3, x_4)=(a^2-2b)(x_1^2+
 x_2^2+x_3^2+x_4^2)+4b(x_1x_3+x_2x_4)$, i.e., $d(x_1,x_2,x_3,x_4)=\Vert x \Vert^2$. To identify when $d(x_1,x_2,x_3, x_4)$ assumes minimum value it is easy to see that is enough to check its values when $x_i$ varies between $0$, $1$ and $-1$, for $i\in\{1,2,3,4 \}$. In particular,

\begin{itemize}
\item[(i)] $d(\pm1,0,0,0)=d(0,\pm1,0,0)=d(0,0,\pm1,0)=d(0,0,0,\pm1)=a^2-2b$;
\item[(ii)] $d(\pm1,0,\pm 1,0)=d(0,\pm1,0,\pm1)=2a^2$;
\item[(iii)] $d(1,0,-1,0)=d(-1,0,1,0)=d(0,1,0,-1)=d(0,-1,0,1)=2a^2-8b$.
\end{itemize}
Set  $A=a^2-2b$, $B=2a^2$ and $C=2a^2-8b$. The remaining possibilities for $d(x_1,x_2,x_3,x_4)$, with $x_i$ varying between $0$,$1$ and $-1$ for all $i\in\{1,2,3,4\}$, are all greater than $A$, $B$ or $C$, as we can see:

\begin{itemize}
\item[(iv)] $d(1,1,-1,0)=d(1,0,-1,1)=d(1,0,-1,-1)=d(1,-1,-1,0)=d(0,1,1,-1)=\linebreak d(0,-1,1,1)=d(1,1,0,-1)=d(1,-1,0,1)=d(0,1,-1,-1)=d(0,-1,-1,1)=\linebreak 
d(-1,1,0,-1)=d(-1,-1,0,1)=d(-1,1,1,0)=d(-1,-1,1,0)=d(-1,0,1,1)=\linebreak d(-1,0,1,-1)=3a^2-10b=A+C>A$;
\item[(v)] $d(\pm1,\pm1,0,0)=d(0,0,\pm1,\pm1)=d(1,-1,0,0)=d(-1,1,0,0)=d(0,0,1,-1)=\linebreak d(0,0,-1,1)=d(1,0,0,1)=d(-1,0,0,-1)=d(1,0,0,-1)=d(-1,0,0,1)=\linebreak d(0,1,1,0)=d(0,-1,-1,0)=d(0,1,-1,0)=d(0,1,-,1,0)=2(a^2-2b)=2A>A$;
\item[(vi)] $d(\pm1,\pm1,\pm1,0)=d(\pm1,\pm1,0,\pm1)=d(\pm1,0,\pm1,\pm1)=d(0,\pm1,\pm1,\pm1)= d(1,-1,1,0)=\linebreak d(-1,1,-1,0)=d(1,-1,0,-1)=d(-1,1,0,1)=d(1,0,1,-1)= d(-1,0,-1,1)=\linebreak d(0,1,-1,1)=d(0,-1,1,-1)=3a^2-2b=A+B>A$;
\item[(vii)] $d(\pm1,\pm1,\pm1,\pm1)=d(1,-1,1,-1)=d(-1,1,-1,1)=4a^2=2B>B$;
\item[(viii)] $d(1,1,-1,-1)=d(1,-1,-1,1)=d(-1,1,1,-1)=d(-1,-1,1,1)=2C>C$;
\item[(ix)] $d(1,1,1,-1)=d(-1,1,-1,-1)=d(1,-1,1,1)=d(-1,-1,-1,1)=d(1,1,-1,1)=\linebreak d(1,-1,-1,-1)=d(-1,1,1,1)=d(-1,-1,1,-1)=4A>A$.
\end{itemize}
According to analyses above, $|\Lambda_f|=A, B$ or $C,$ i.e.,
  $|\Lambda_f|=a^2-2b, 2a^2$ or $2a^2-8b.$
Note that if $a^2-2b=min\{a^2-2b,2a^2,2a^2-8b\}$, then $\Lambda_f$ is well-rounded, since the vectors  $x\in \Lambda_f$ such that $\Vert x \Vert ^2=a^2-2b$ are linearly independent. 
Let us show that $a^2-2b=min\{a^2-2b,2a^2,2a^2-8b\}$ is also a necessary condition for $\Lambda_f$ to be well-rounded. The proof is similar to that of the Theorem \ref{teograu3}.
Suppose that $\Lambda_f$ is well-rounded and  $a^2-2b\neq min\{a^2-2b,2a^2,2a^2-8b\}$. Clearly $b\neq 0.$
When $b>0$, it follows that $2a^2-8b<2a^2.$  By our assumption $2a^2-8b<a^2-2b$
and then we conclude that $2a^2-8b=min\{a^2-2b,2a^2,2a^2-8b\}$. Since  the vectors  $x\in \Lambda_f$ such that $\Vert x \Vert ^2=2a^2-8b$ are linearly dependent, it follows that $S(\Lambda_f)$ does not span $\mathbb{R}^4.$ On the other hand, 
when $b<0$, it follows that  $2a^2<2a^2-8b$. By our assumption, $2a^2<a^2-2b$ and then we conclude that
 $min \{a^2-2b,2a^2,2a^2-8b\}=2a^2$. Again, it is easy to see that in this case $S(\Lambda_f)$ does not span $\mathbb{R}^4.$
 Thus, $\Lambda_f$ is well-rounded if and only if $a^2-2b= min\{a^2-2b,2a^2,2a^2-8b\}$.  It means that  
$a^2-2b\leq 2a^2-8b$ and  $a^2-2b\leq 2a^2$.
When $b\geq 0$ ($b<0$)  the inequalities above are satisfied if and only if $a^2\geq 6b$ ($a^2\geq -2b$). Consequently, we have proved that the following theorem holds true.

\begin{theorem}\thlabel{teoremaprincipal3}
Let  $f(x)=x^4+ax^3+bx^2+cx+d\in \mathbb{Z}[x],$ with $a\neq 0$, be a polynomial  with distinct real roots $\alpha,\beta,\gamma,\psi$, where  $\alpha=-\gamma$. Under these conditions,
\begin{itemize}
\item[(i)] if $b<0$, then $\Lambda_f$ is well-rounded if and only if $a^2\geq-2b$;
\item[(ii)] if $b\geq  0$, then $\Lambda_f$ is well-rounded if and only if $a^2\geq 6b$.
\end{itemize} \end{theorem}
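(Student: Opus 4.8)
The plan is to reuse the mechanism already in place for the cubic case in \thref{teograu3}, now with the norm formula supplied by \thref{norma4}. First I would observe that by \thref{norma4} the squared length of a general lattice point $x=x_1v_1+x_2v_2+x_3v_3+x_4v_4$ is the quadratic form $d(x_1,x_2,x_3,x_4)=(a^2-2b)(x_1^2+x_2^2+x_3^2+x_4^2)+4b(x_1x_3+x_2x_4)$. Since this form is positive definite (because $f$ has real distinct roots, $a^2-4b>0$, so $a^2-2b>2b$ and the form splits into two copies of a positive-definite binary form in $(x_1,x_3)$ and $(x_2,x_4)$), its minimum over $\mathbb{Z}^4\setminus\{0\}$ is attained at a vector with all coordinates in $\{-1,0,1\}$; this reduces the problem to the finite check already tabulated in items (i)--(ix) of the excerpt.

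Next I would extract from that table the only three candidate minimal values, $A=a^2-2b$ (the norms of $\pm v_i$), $B=2a^2$, and $C=2a^2-8b$, noting that every other value in the list is $A+C$, $2A$, $A+B$, $2B$, $2C$ or $4A$, hence strictly larger than $\min\{A,B,C\}$ whenever $A,B,C>0$. So $|\Lambda_f|=\min\{A,B,C\}$. The sufficiency direction is then immediate: if $A=\min\{A,B,C\}$, the minimal vectors include $\pm v_1,\pm v_2,\pm v_3,\pm v_4$, which are linearly independent by Proposition \ref{formuladetG3}, so $S(\Lambda_f)$ spans $\mathbb{R}^4$ and $\Lambda_f$ is well-rounded.

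For necessity I would argue by contrapositive exactly as in \thref{teograu3}: suppose $\Lambda_f$ is well-rounded but $A\ne\min\{A,B,C\}$. Then $b\ne0$. If $b>0$ then $C=2a^2-8b<2a^2=B$, and the assumption forces $C<A$, so $|\Lambda_f|=C$; but the vectors realizing norm $C$ are $\pm(e_1-e_3)$ and $\pm(e_2-e_4)$ in the $(x_1,x_2,x_3,x_4)$ coordinates, which span only a $2$-dimensional subspace, contradicting well-roundedness. If $b<0$ then $B=2a^2<2a^2-8b=C$, and the assumption forces $B<A$, so $|\Lambda_f|=B$; the vectors realizing norm $B$ are $\pm(e_1+e_3)$ and $\pm(e_2+e_4)$, again spanning only a plane, a contradiction. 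Hence well-roundedness is equivalent to $A=\min\{A,B,C\}$, i.e. to $a^2-2b\le 2a^2$ and $a^2-2b\le 2a^2-8b$, i.e. to $-2b\le a^2$ and $6b\le a^2$. Splitting on the sign of $b$: when $b\ge0$ the binding constraint is $a^2\ge 6b$ (which then also gives $a^2\ge-2b$ automatically), and when $b<0$ the binding constraint is $a^2\ge-2b$ (which gives $a^2\ge 6b$ automatically). This yields exactly statements (i) and (ii).

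I do not expect any serious obstacle here; the only point requiring a little care is the claim that the minimum of the quadratic form is attained in $\{-1,0,1\}^4$ — this needs the positive-definiteness observation above, together with the elementary fact that for a positive-definite form each minimal vector is \emph{primitive} and, here, that the block structure forces the coordinates to be small. The paper already takes this ``check coordinates in $\{0,\pm1\}$'' step for granted in the earlier sections, so the honest write-up is just to point back to the tabulated values (i)--(ix) and to the parallel argument in the proof of \thref{teograu3}.
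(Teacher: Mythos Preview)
Your proposal is correct and follows essentially the same route as the paper's own argument: identify the three candidate values $A=a^2-2b$, $B=2a^2$, $C=2a^2-8b$ from the tabulation (i)--(ix), observe that every other $\{-1,0,1\}^4$ value dominates one of them, and then run the same contrapositive on the sign of $b$ to show that well-roundedness is equivalent to $A=\min\{A,B,C\}$. Your added remark on the block structure and positive-definiteness of the form is a welcome gloss on the ``it is easy to see that it is enough to check $\{-1,0,1\}^4$'' step, which the paper simply asserts.
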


\begin{remark}
Although $|S(\Lambda_f)|$ increases when $a^2=-2b$ or $a^2=6b$, these situations do not give us the scenario where  $\Lambda_f$ has the highest packing density in dimension $4$, since in both cases $|S(\Lambda_f)|$ has $12$  minimal vectors,  while the ideal situation happens when the number of minimum vectors is $24$. This can also be verified by calculating the center density in these cases.
Thus, another approach is necessary to obtain 
lattices with the highest packing density, which will be treated in more details in a forthcoming paper.
\end{remark}

\section{Conclusion}

In this paper, we have investigated when lattices obtained by polynomials up to degree $4$ with integers coefficients are well-rounded. The construction of lattices via polynomials is not often exploited in the literature. The authors in \cite{andre} presented constructions of lattices up to dimension 3, however,  the property of well-roundedness was not treated.
The difficulty of this construction lies in finding linearly independent vectors consisting of the roots of a polynomial with integers coefficients. It is also a challenge to identify the  determinant of the generator matrix and the norm of a point in the lattice in terms of the coefficients of the polynomial considered. In dimension $4$, the hypothesis $\alpha=-\gamma$ was  fundamental to deal with these issues in our approach. Well-rounded lattices have been recently studied in several scenarios \cite{siso, mimo}, which makes this subject attractive and current.

\end{document}